\title{Physics-Based Communication Compression via Lyapunov-Weighted Event-Triggered Control}
\newif\ifuniqueAffiliation
\author{
    \href{https://orcid.org/0000-0001-5024-9224}{\includegraphics[scale=0.06]{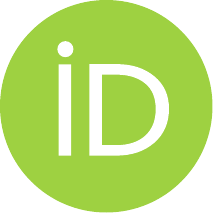}\hspace{1mm}Abbas Tariverdi} \\
    \texttt{abbasta@abbasta.com}
}
\newtheorem{proposition}{Proposition}
\newtheorem{theorem}{Theorem}
\newtheorem{remark}{Remark} 
\newtheorem{assumption}{Assumption}
\newtheorem{definition}{Definition}
\begin{document}
\maketitle

\begin{abstract}
Event-Triggered Control (ETC) reduces communication overhead in networked systems by transmitting only when stability requires it. Conventional mechanisms use isotropic error thresholds ($\|e\| \le \sigma \|x\|$), treating all directions equally. This ignores stability geometry and triggers conservatively. We propose a static directional triggering mechanism that exploits this asymmetry. By weighting errors via the Lyapunov matrix $P$, we define an anisotropic half-space scaling with instantaneous energy margins: larger deviations tolerated along stable modes, strict bounds where instability threatens. We prove global asymptotic stability and exclusion of Zeno behavior. Monte Carlo simulations ($N=100$) show 43.6\% fewer events than optimally tuned isotropic methods while achieving $2.1\times$ better control performance than time-varying alternatives. The mechanism functions as a runtime safety gate for learning-based controllers operating under communication constraints.
\end{abstract}

\keywords{Event-triggered control \and Networked control systems \and Lyapunov methods \and Cyber-physical systems \and Safety critical control}

\section{Introduction}

Transmitting a bit costs roughly a thousand times more energy than executing an instruction~\cite{Pottie2000,Heinzelman2000}. This asymmetry dominates the energy budget of battery-powered cyber-physical systems: a sensor node built around an STM32L4 microcontroller draws sub-microamp currents in deep sleep, single-digit milliamps during computation, but 20--120\,mA when its radio transmits~\cite{STM32L4,SX1276}. To put this concretely: a single two-second LoRa transmission at full power consumes enough charge to keep the processor in low-power sleep for over 33,000 hours. The radio is not a peripheral; it is the battery. For industrial IoT deployments, autonomous vehicles coordinating over wireless links, and distributed robotic systems, the communication channel, not the embedded processor, determines operational lifetime~\cite{Heemels2012}.

Event-triggered control (ETC) attacks this problem by abandoning periodic sampling. Rather than transmitting at fixed intervals regardless of system state, an event-triggered controller sends updates only when a state-dependent condition demands it~\cite{Arzen1999,Astrom2002}. Tabuada's foundational work established the modern stability-theoretic framework, proving that closed-loop guarantees can be maintained while transmitting only when necessary~\cite{Tabuada2007}. The design question becomes: what should the triggering condition be?

Tabuada's approach monitors the error $e = x - \hat{x}$ between the true state and the controller's held estimate, triggering when $\|e\| > \sigma\|x\|$ for some $\sigma \in (0,1)$~\cite{Tabuada2007}. This defines a spherical triggering set, geometrically clean and analytically tractable, but fundamentally conservative. The mechanism treats all error directions identically: an error pushing the system toward instability triggers at the same threshold as one pulling it toward equilibrium.

Subsequent work has sought to reduce this conservatism through various mechanisms. Girard introduced dynamic triggering, augmenting the system with an internal variable $\eta$ whose evolution adjusts the threshold over time~\cite{Girard2015}. This demonstrably increases inter-event intervals, but at the cost of additional state, tuning parameters, and more involved analysis for output-based and decentralized implementations~\cite{Dolk2017}. Mazo and colleagues developed self-triggered control using input-to-state stability bounds to proactively compute the next transmission time~\cite{Mazo2010}, while Postoyan et al.\ unified much of this landscape within a general Lyapunov framework~\cite{Postoyan2015}. What none of these approaches exploit is the {directional} information already encoded in the Lyapunov function.

A quadratic Lyapunov function $V(x) = x^\top P x$ defines not merely an energy level but an energy gradient $\nabla V = 2Px$. This gradient points toward steepest energy increase, the direction most threatening to stability. Errors aligned with this gradient demand immediate correction; errors orthogonal to it have minimal impact on the decay rate. Yet existing triggering mechanisms ignore this geometry entirely, enforcing isotropic thresholds that cannot distinguish dangerous error directions from benign ones.

We propose a static directional triggering mechanism that weights errors according to their projection onto the Lyapunov gradient. The resulting condition carves out a state-dependent half-space rather than a sphere (Proposition~\ref{prop:geometry}), permitting large errors when they point in safe directions while maintaining tight control along the critical descent direction. The mechanism requires no auxiliary dynamics, no memory of past transmissions, and no tuning beyond what standard ETC already demands. Monte Carlo simulation across 100 trials demonstrates 43.6\% fewer transmissions than optimally-tuned isotropic baselines, with $2.1\times$ better regulation performance than dynamic alternatives. We further extend the trigger into a Lyapunov Safety Gate (Theorem~\ref{thm:safety}) capable of certifying black-box neural network controllers in real time.

Section~\ref{sec:problem} formulates the networked control problem. Section~\ref{sec:main} derives the directional triggering condition, proves asymptotic stability, and establishes a positive lower bound on inter-event times to exclude Zeno behavior. Section~\ref{sec:validation} presents numerical validation. Section~\ref{sec:impact} addresses deployment: battery lifetime, certification, and integration with learning-based control. Section~\ref{sec:conclusion} discusses limitations and extensions.

\section{Problem Formulation}\label{sec:problem}

Consider the linear time-invariant (LTI) system given by:
\begin{equation}
    \dot{x}(t) = Ax(t) + Bu(t), \quad x(t) \in \mathbb{R}^n, \; u(t) \in \mathbb{R}^m.
    \label{eq:sys}
\end{equation}

\begin{assumption}\label{ass:stabilizable}
The pair $(A, B)$ is stabilizable. The feedback gain $K$ is designed such that $A_{cl} = A - BK$ is Hurwitz.
\end{assumption}

Since $A_{cl}$ is Hurwitz, for any symmetric positive definite matrix $Q$ there exists a unique symmetric positive definite solution $P$ to the Lyapunov equation:
\begin{equation}
    A_{cl}^\top P + P A_{cl} = -Q.
    \label{eq:lyapunov}
\end{equation}

The control input is updated via a Zero-Order Hold (ZOH) such that $u(t) = -K x(t_k)$ for $t \in [t_k, t_{k+1})$, where $\{t_k\}_{k=0}^\infty$ is the sequence of transmission instants. Defining the sampling induced error as $e(t) = x(t) - x(t_k)$, the closed-loop dynamics satisfy:
\begin{equation}
    \dot{x}(t) = A_{cl}x(t) + BKe(t).
    \label{eq:closedloop}
\end{equation}

The objective is to determine a triggering mechanism for $\{t_k\}$ that minimizes the following cost function while ensuring asymptotic stability:
\begin{equation}
    J = \int_{0}^{\infty} x(t)^\top Q x(t) \, dt + \lambda N,
    \label{eq:cost}
\end{equation}
where $N$ denotes the total number of transmissions and $\lambda > 0$ is a weighting parameter penalizing communication frequency.

\section{Main Results}\label{sec:main}

\subsection{Directional Triggering Condition}

Consider the Lyapunov candidate $V(x) = x^\top P x$. Its time derivative along the trajectories of \eqref{eq:closedloop} is:
\begin{equation}
    \dot{V} = -x^\top Q x + 2x^\top P B K e.
    \label{eq:vdot}
\end{equation}
For stability ($\dot{V} < 0$), the destabilizing term $2x^\top P B K e$ must not exceed the dissipation $-x^\top Q x$. We propose the following triggering rule. An event is triggered when:
\begin{equation}
    \sigma x^\top(t) P x(t) - 2 x^\top(t) P B K e(t) \leq 0,
    \label{eq:trigger}
\end{equation}
where $\sigma \in (0, 1)$ is a design parameter.

\begin{definition}[Triggering Sets]\label{def:sets}
Let $\mathcal{T}_{\mathrm{iso}} = \{ (x,e) : \|e\| \leq \sigma_{\mathrm{iso}} \|x\| \}$ denote the standard isotropic triggering set \cite{Tabuada2007}. Let $\mathcal{T}_{\mathrm{dir}} = \{ (x,e) : 2x^\top P B K e \leq \sigma x^\top P x \}$ denote the proposed directional triggering set.
\end{definition}

\begin{proposition}[Half-Space Structure]\label{prop:geometry}
The set $\mathcal{T}_{\mathrm{dir}}$ defines a half-space in error coordinates, bounded by the hyperplane orthogonal to $v = (PBK)^\top x$.
\end{proposition}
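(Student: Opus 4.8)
The plan is to read the proposition as a statement about each fiber of $\mathcal{T}_{\mathrm{dir}}$ over a fixed state. Fix $x \in \mathbb{R}^n$ and consider the slice of admissible error vectors
\[
  E(x) := \{\, e \in \mathbb{R}^n \;:\; 2\,x^\top P B K e \leq \sigma\, x^\top P x \,\},
\]
so that $\mathcal{T}_{\mathrm{dir}} = \{(x,e) : e \in E(x)\}$. The goal is to show $E(x)$ is a closed half-space whose bounding hyperplane is orthogonal to $v = (PBK)^\top x$.

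First I would rewrite the triggering functional as a single inner product in $e$. Since $P = P^\top$, the scalar $x^\top P B K e$ equals $\langle (PBK)^\top x,\, e\rangle = \langle v,\, e\rangle$ with $v = (PBK)^\top x = K^\top B^\top P x$. Hence the defining inequality of $E(x)$ is equivalent to $\langle v,\, e\rangle \leq c$, where $c := \tfrac{\sigma}{2}\, x^\top P x$. Because $P \succ 0$ and $\sigma \in (0,1)$, the offset satisfies $c \geq 0$, with $c = 0$ precisely when $x = 0$.

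Next I would invoke the elementary fact that for any nonzero $v$ and any scalar $c$, the set $\{\, e : \langle v,\, e\rangle \leq c \,\}$ is exactly a closed half-space: its topological boundary is the affine hyperplane $H = \{\, e : \langle v,\, e\rangle = c \,\}$, and $v$ is a normal to $H$ since any $e_1, e_2 \in H$ obey $\langle v,\, e_1 - e_2 \rangle = 0$. This is precisely the asserted orthogonality of the bounding hyperplane to $v = (PBK)^\top x$. Letting $x$ range over $\mathbb{R}^n$ then exhibits $\mathcal{T}_{\mathrm{dir}}$ as a family of half-spaces in error coordinates whose orientation rotates with the Lyapunov-weighted direction $v$, which is the content of the proposition.

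The one place the statement needs a caveat — and the only bookkeeping subtlety — is the degenerate case $v = (PBK)^\top x = 0$, i.e.\ states with $B^\top P x = 0$. There the slice $E(x)$ reduces to the trivial inequality $0 \leq c$; since $c \geq 0$ this holds for every $e$, so $E(x) = \mathbb{R}^n$, a degenerate half-space with empty boundary (consistent with the fact that the cross term $2x^\top PBK e$ in $\dot V$ vanishes identically along such $x$, so no update is ever demanded). I would note this explicitly rather than claim a proper half-space unconditionally. Apart from that, the argument is a direct unpacking of Definition~\ref{def:sets}, so I do not anticipate any real obstacle; the only care required is the transpose identity $x^\top PBK e = \langle (PBK)^\top x, e\rangle$ and the handling of the $v = 0$ case.
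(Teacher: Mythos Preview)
Your proof is correct and follows essentially the same approach as the paper: fix $x$, rewrite the triggering inequality as $v^\top e \le \sigma x^\top P x/2$ with $v = (PBK)^\top x$, and observe this is a linear constraint in $e$ defining a half-space with normal $v$. You add welcome detail the paper omits, in particular the explicit handling of the degenerate case $v=0$.
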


\begin{proof}
Rewrite the inequality in $\mathcal{T}_{\mathrm{dir}}$ as $v^\top e \leq \sigma x^\top P x / 2$, where $v = (PBK)^\top x$. For a fixed state $x$, this is a linear inequality in $e$, defining a half-space with normal vector $v$.
\end{proof}

The vector $v$ rotates as $x$ evolves, creating a state-dependent permissible error region. Errors orthogonal to the Lyapunov gradient $\nabla V = 2Px$ satisfy the condition regardless of magnitude; errors aligned with the gradient are penalized. The isotropic set $\mathcal{T}_{\mathrm{iso}}$ constrains error magnitude uniformly in all directions. Figure~\ref{fig:geometric} illustrates this distinction.

\begin{figure}[t]
    \centering
    \includegraphics[width=\textwidth]{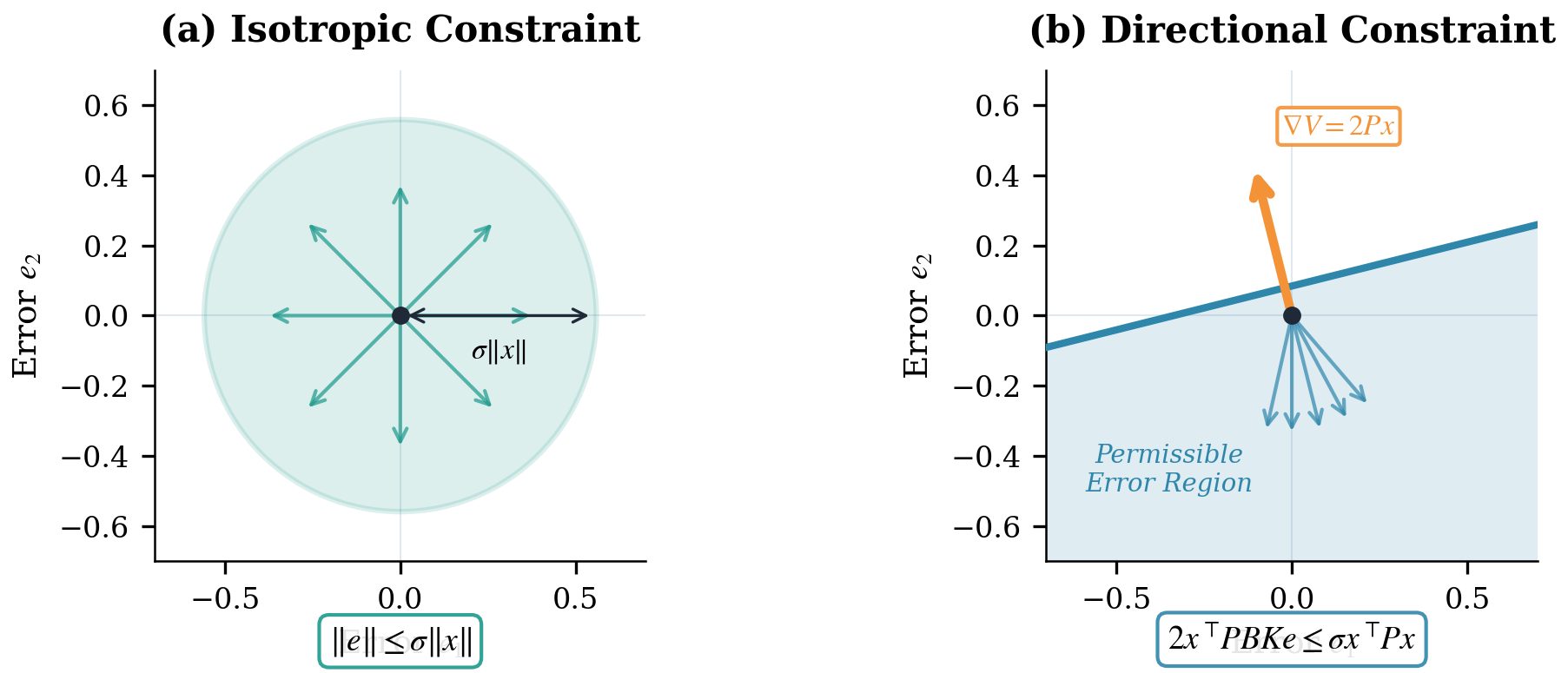}
    \caption{Triggering constraints in error space $(e_1, e_2)$. (a) Isotropic constraint: permissible errors form a sphere of radius $\sigma\|x\|$. (b) Directional constraint: permissible errors form a half-space bounded by a hyperplane orthogonal to $\nabla V = 2Px$ (orange arrow). Errors aligned with $\nabla V$ trigger events; orthogonal errors do not.}
    \label{fig:geometric}
\end{figure}

This approach differs from the time-varying Lyapunov method of Mazo et al.~\cite{Mazo2010}, which enforces $V(x(t)) \leq V(x(t_k))e^{-\alpha(t-t_k)}$. That formulation requires memory of $V(x(t_k))$ and elapsed time $t-t_k$. In contrast, condition \eqref{eq:trigger} depends only on current values $x(t)$ and $e(t)$, so no clock or stored energy values are required.

\subsection{Stability Analysis}

\begin{theorem}[Asymptotic Stability]\label{thm:stability}
If the triggering parameter $\sigma$ satisfies
\begin{equation}
    \sigma < \frac{\lambda_{\min}(Q)}{\lambda_{\max}(P)},
    \label{eq:stab_condition}
\end{equation}
then the triggering rule \eqref{eq:trigger} guarantees global asymptotic stability of the origin.
\end{theorem}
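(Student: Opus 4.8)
The plan is a single-shot Lyapunov argument using the quadratic candidate $V(x) = x^\top P x$ already in play, exploiting the fact that the trigger enforces membership in $\mathcal{T}_{\mathrm{dir}}$ between events. First I would note that on each inter-event interval the trigger has not yet fired, so by Definition~\ref{def:sets} the pair $(x(t),e(t))$ lies in $\mathcal{T}_{\mathrm{dir}}$, i.e. $2x^\top P B K e \le \sigma x^\top P x$; at $t_k$ the reset $e(t_k)=0$ makes the left side vanish, and by continuity the inequality also holds at the left endpoint $t_{k+1}^-$. Hence $2x^\top P B K e \le \sigma x^\top P x$ holds for all $t\ge 0$ (off the measure-zero set of triggering instants). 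Substituting this into the expression \eqref{eq:vdot} for $\dot V$ along \eqref{eq:closedloop} yields $\dot V \le -x^\top Q x + \sigma x^\top P x$.

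Next I would pass to Rayleigh bounds: $x^\top Q x \ge \lambda_{\min}(Q)\|x\|^2$ and $x^\top P x \le \lambda_{\max}(P)\|x\|^2$, so $\dot V \le -\bigl(\lambda_{\min}(Q) - \sigma\lambda_{\max}(P)\bigr)\|x\|^2$. Condition \eqref{eq:stab_condition} is precisely what makes $c := \lambda_{\min}(Q) - \sigma\lambda_{\max}(P)$ strictly positive. Using $\|x\|^2 \ge V(x)/\lambda_{\max}(P)$ then gives a differential inequality purely in $V$, namely $\dot V \le -\bigl(c/\lambda_{\max}(P)\bigr)V$. Since the state trajectory is continuous (the right-hand side of \eqref{eq:closedloop} is piecewise continuous in $t$, and the jumps of $e$ at triggering instants do not break continuity of $x$, hence not of $V$), $V$ is absolutely continuous and the comparison lemma applies: $V(x(t)) \le V(x(0))\,e^{-(c/\lambda_{\max}(P))t}$. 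Combining with $\lambda_{\min}(P)\|x\|^2 \le V(x)$ gives global exponential — hence global asymptotic — convergence, while radial unboundedness of $V$ and the monotone decay of $V$ furnish Lyapunov stability and exclude finite escape time, establishing GAS.

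The argument is essentially routine; the only point requiring care is the hybrid bookkeeping — being explicit that the enforced inequality $2x^\top P B K e \le \sigma x^\top P x$ holds on the closure of every inter-event interval (via the reset at $t_k$ and continuity at $t_{k+1}^-$), so that the $\dot V$ bound is valid almost everywhere on $[0,\infty)$ rather than merely on open subintervals. I would also remark that the conclusion presupposes the transmission sequence extends to all of $[0,\infty)$; that it does not accumulate in finite time (Zeno-freeness) is a separate property, established in the next subsection, and is what ensures the exponential bound actually drives the state to the origin rather than to a nonzero limit.
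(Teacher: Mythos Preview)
Your proof is correct and follows essentially the same route as the paper: enforce $2x^\top PBKe \le \sigma x^\top P x$ between events, substitute into \eqref{eq:vdot}, and apply Rayleigh bounds to obtain $\dot V \le -(\lambda_{\min}(Q)-\sigma\lambda_{\max}(P))\|x\|^2$. You add some worthwhile detail the paper omits --- the hybrid bookkeeping at the endpoints of each inter-event interval, the explicit passage to an exponential bound via the comparison lemma, and the caveat that Zeno-freeness is a separate ingredient --- but the core argument is identical.
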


\begin{proof}
Between events, condition \eqref{eq:trigger} ensures $2x^\top P B K e \leq \sigma x^\top P x$. Substituting into \eqref{eq:vdot}:
\begin{equation*}
    \dot{V} \leq -x^\top Q x + \sigma x^\top P x.
\end{equation*}
Applying the Rayleigh quotient bounds $x^\top Q x \geq \lambda_{\min}(Q)\|x\|^2$ and $x^\top P x \leq \lambda_{\max}(P)\|x\|^2$:
\begin{align*}
    \dot{V} &\leq -\lambda_{\min}(Q)\|x\|^2 + \sigma \lambda_{\max}(P)\|x\|^2 \\
            &= -\left( \lambda_{\min}(Q) - \sigma \lambda_{\max}(P) \right) \|x\|^2.
\end{align*}
Under \eqref{eq:stab_condition}, the coefficient is strictly positive, so $\dot{V}$ is negative definite. Global asymptotic stability follows from standard Lyapunov theory.
\end{proof}

Figure~\ref{fig:stability} illustrates the stability margin for the test system in Section~\ref{sec:validation}.

\begin{figure}[t]
    \centering
    \includegraphics[width=0.7\textwidth]{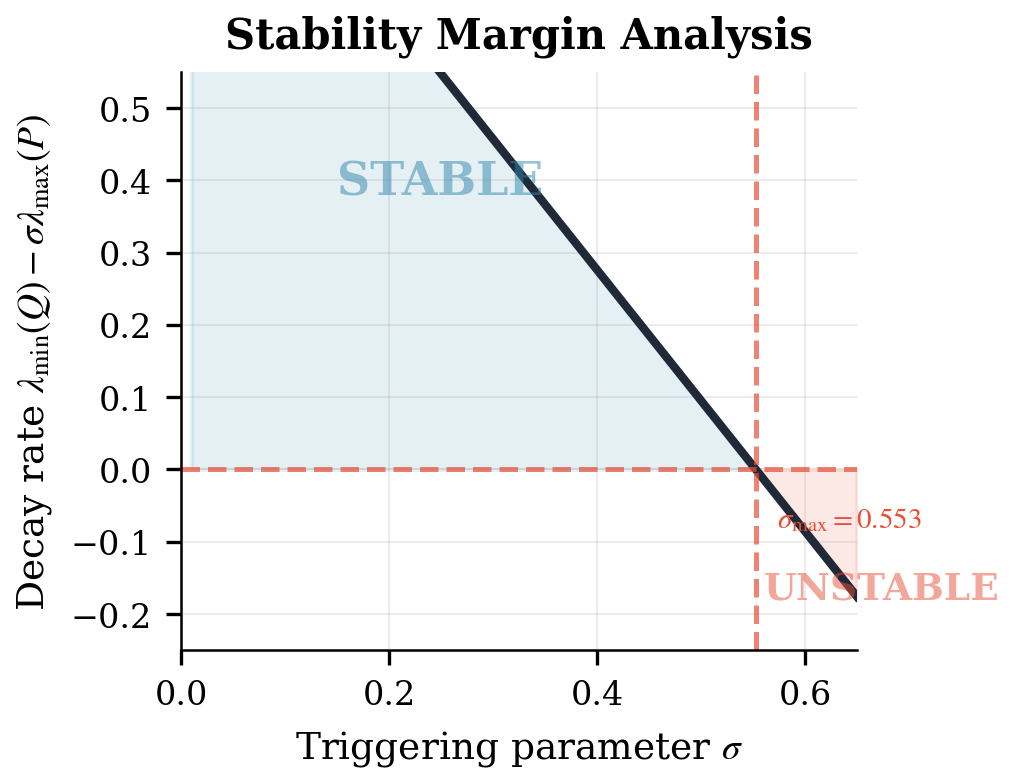}
    \caption{Stability margin analysis. The curve shows the decay rate $\lambda_{\min}(Q) - \sigma\lambda_{\max}(P)$ versus $\sigma$. The vertical dashed line marks the bound $\sigma_{\max} = 0.5528$ from Theorem~\ref{thm:stability}. The operating point $\sigma = 0.10$ maintains a large margin below the limit.}
    \label{fig:stability}
\end{figure}

\subsection{Exclusion of Zeno Behavior}

\begin{theorem}[Positive MIET]\label{thm:zeno}
There exists $\tau > 0$ such that $t_{k+1} - t_k \geq \tau$ for all $k \geq 0$.
\end{theorem}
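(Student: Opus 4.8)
The plan is to follow the classical Zeno-exclusion argument of Tabuada~\cite{Tabuada2007}, adapted to the directional trigger~\eqref{eq:trigger}. The key structural fact is that at each event $e(t_k)=x(t_k)-x(t_k)=0$, so immediately afterwards the left-hand side of~\eqref{eq:trigger} equals $\sigma\,x(t_k)^\top P x(t_k)>0$ whenever $x(t_k)\neq 0$: the trigger re-arms with a strictly positive slack, and by continuity a strictly positive time must pass before that slack is exhausted. Making this quantitative requires two ingredients: a uniform lower bound on the ratio $\|e\|/\|x\|$ at a triggering instant, and a uniform upper bound on the rate at which that ratio can grow between events. Note that neither ingredient will invoke the stability condition~\eqref{eq:stab_condition}, consistent with the hypothesis-free statement of the theorem.

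For the first ingredient, observe that the trigger fires exactly when the left-hand side of~\eqref{eq:trigger} first reaches zero, so at $t_{k+1}$ one has $\sigma\,x^\top P x = 2\,x^\top PBK e$. Applying Cauchy--Schwarz to the right-hand side and the Rayleigh bound $x^\top P x\ge\lambda_{\min}(P)\|x\|^2$ to the left-hand side gives, for $x\neq 0$,
\begin{equation*}
\frac{\|e(t_{k+1})\|}{\|x(t_{k+1})\|}\;\ge\;\frac{\sigma\,\lambda_{\min}(P)}{2\,\|PBK\|}\;=:\;c\;>\;0,
\end{equation*}
a constant depending only on the fixed problem data. (If $BK=0$ the trigger is never active for $x\neq 0$ and there is nothing to prove; the trajectory $x\equiv 0$ is the equilibrium and requires no transmissions.)

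For the second ingredient, on $[t_k,t_{k+1})$ the held state is constant, so $\dot e=\dot x=A_{cl}x+BKe$, whence $\|\dot e\|\le\|A_{cl}\|\|x\|+\|BK\|\|e\|$ while $\|x\|$ cannot shrink faster than the same rate. A routine computation (using upper-right Dini derivatives, since the norms are not everywhere differentiable) then shows that $y(t):=\|e(t)\|/\|x(t)\|$ satisfies
\begin{equation*}
\dot y\;\le\;\|A_{cl}\|+\bigl(\|A_{cl}\|+\|BK\|\bigr)y+\|BK\|\,y^2\;=:\;g(y),\qquad y(t_k^+)=0 .
\end{equation*}
Since $g$ is continuous and strictly positive on $[0,\infty)$ (as $A_{cl}$ Hurwitz forces $\|A_{cl}\|>0$), the solution $\phi$ of $\dot\phi=g(\phi)$, $\phi(0)=0$, is strictly increasing and reaches the level $c$ only after the finite, strictly positive time $\tau=\int_0^c \frac{ds}{g(s)}$. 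By the comparison lemma $y(t)\le\phi(t-t_k)$ on $[t_k,t_{k+1})$, and since $y$ must attain $c$ before an event can fire, $t_{k+1}-t_k\ge\tau$. The bound $\tau$ depends only on $A_{cl},B,K,P,\sigma$, hence is uniform in $k$, which is the assertion.

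The main obstacle is not the core estimates, which closely mirror the standard argument, but the surrounding regularity and degeneracy bookkeeping: the Euclidean norm is non-differentiable where $e=0$ (which recurs at every event) or where $x=0$, so the differential inequality for $y$ must be stated with Dini derivatives and the comparison lemma invoked in that generalized form; one must verify that $x(t)\neq 0$ on the whole open interval $(t_k,t_{k+1})$ — which holds because at any such zero the left-hand side of~\eqref{eq:trigger} would itself vanish, forcing an event there — so that the ratio $y$ is well defined throughout; and the degenerate possibility of the state reaching the origin exactly must be dispatched separately as the trivial case in which no further communication is needed.
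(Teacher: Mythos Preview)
Your argument is correct and follows essentially the same route as the paper's proof in Appendix~\ref{app:miet}: both reduce the directional trigger to a necessary threshold on the ratio $\|e\|/\|x\|$ via Cauchy--Schwarz and the Rayleigh bound (your $c$ is the paper's $\Gamma=\sigma\lambda_{\min}(P)/(2\|PBK\|)$), bound the growth of this ratio by the same Riccati-type differential inequality with coefficients $\|A_{cl}\|$, $\|A_{cl}\|+\|BK\|$, $\|BK\|$, and integrate to obtain $\tau\ge\int_0^c ds/g(s)>0$. Your discussion of Dini derivatives and the $x=0$ degeneracy is more careful than the paper's sketch, but the underlying strategy is identical.
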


\begin{proof}
See Appendix~\ref{app:miet}.
\end{proof}

\section{Numerical Validation}\label{sec:validation}

\subsection{System Setup}

The test system is a planar unstable plant with dynamics defined by:
\begin{equation*}
    A = \begin{bmatrix} 0 & 1 \\ -2 & 3 \end{bmatrix}, \quad B = \begin{bmatrix} 0 \\ 1 \end{bmatrix}.
\end{equation*}
The controller gain $K = [-1, 4]$ places the closed-loop poles at $-0.5 \pm 0.866i$. The Lyapunov equation \eqref{eq:lyapunov} with $Q = I$ yields $\lambda_{\min}(Q) = 1.0$ and $\lambda_{\max}(P) \approx 1.81$. From Theorem~\ref{thm:stability}, the theoretical stability bound is $\sigma < 0.5528$.

We conducted $N=100$ Monte Carlo runs with initial conditions drawn uniformly from $x_0 \sim \mathcal{U}([-5, 5]^2)$ over a simulation horizon of $T=50$\,s. All methods were tuned via grid search to minimize the total cost $J_{\mathrm{total}} = J + \lambda N$ with $\lambda=0.015$. This yielded $\sigma=0.10$ for the proposed method, $\sigma=0.70$ for Tabuada~\cite{Tabuada2007}, and $\alpha=0.50$ for Mazo~\cite{Mazo2010}.

\subsection{Results and Discussion}

Table~\ref{tab:results} summarizes the simulation results. The proposed method reduces transmission events by 43.6\% compared to the isotropic baseline (Tabuada) and maintains comparable control performance ($J=13.43$ vs.\ $15.14$). Against the time-varying method (Mazo), the proposed approach uses fewer events (82.0 vs.\ 85.7) and improves control performance by a factor of 2.1 ($J=13.43$ vs.\ $27.79$).

\begin{table}[t]
\centering
\caption{Monte Carlo Simulation Results ($N=100$)}
\label{tab:results}
\begin{tabular}{@{}llccc@{}}
\toprule
Method & Param & Events & Perf.\ ($J$) & Total Cost \\
\midrule
Proposed & $\sigma=0.10$ & $82.0 \pm 0.4$ & $13.43 \pm 9.78$ & $14.66$ \\
Mazo \cite{Mazo2010} & $\alpha=0.50$ & $85.7 \pm 0.6$ & $27.79 \pm 18.59$ & $29.07$ \\
Tabuada \cite{Tabuada2007} & $\sigma=0.70$ & $145.3 \pm 5.2$ & $15.14 \pm 10.56$ & $17.32$ \\
\bottomrule
\end{tabular}
\end{table}

Figures~\ref{fig:trajectories}--\ref{fig:lyapunov} illustrate the dynamic behavior for a representative initial condition $x_0 = [4, 3]^\top$. All three methods stabilize the system, but Mazo exhibits larger oscillations (Figure~\ref{fig:trajectories}). The cause is visible in the Lyapunov decay (Figure~\ref{fig:lyapunov}): the time-varying envelope permits non-monotonic energy evolution, allowing the state to drift between triggers. The proposed method and Tabuada both enforce monotonic decay.

\begin{figure}[t]
    \centering
    \includegraphics[width=\textwidth]{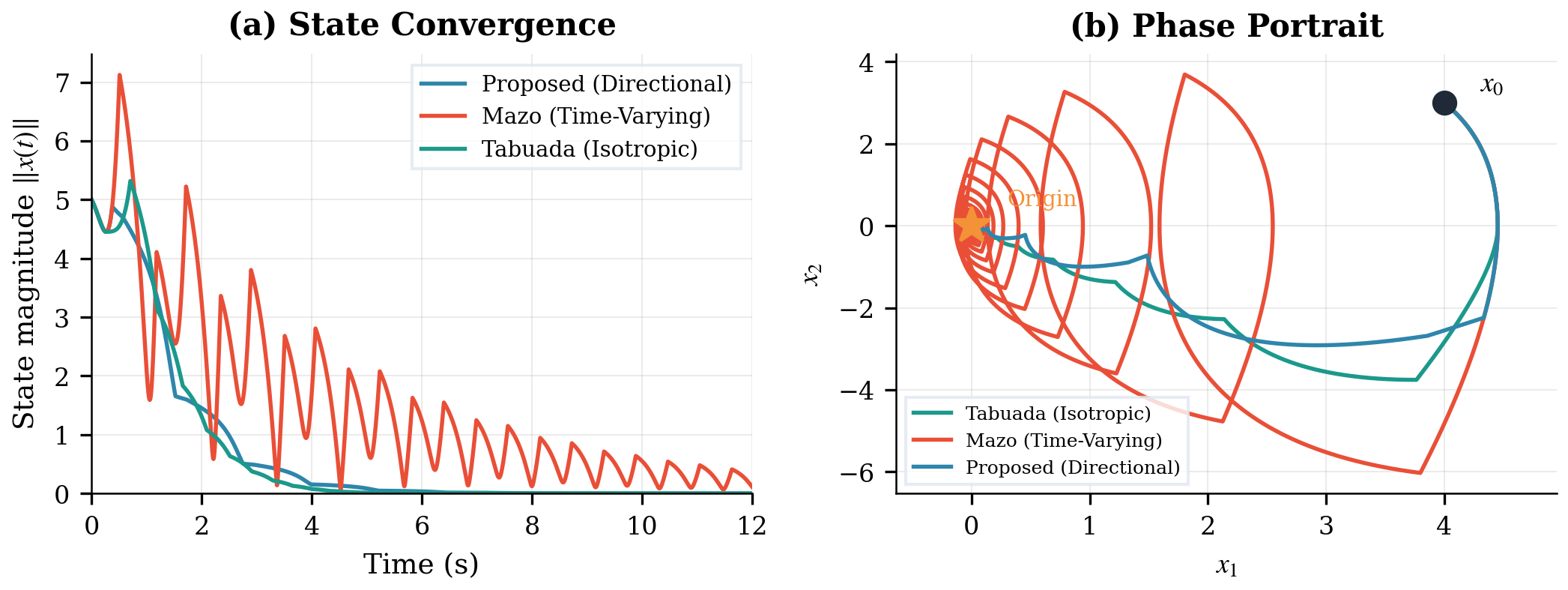}
    \caption{State trajectories for $x_0 = [4, 3]^\top$. (a) State magnitude $\|x(t)\|$. (b) Phase portrait. Mazo (red) oscillates more than Proposed (blue) or Tabuada (teal) due to its loose temporal envelope.}
    \label{fig:trajectories}
\end{figure}

The event timing diagram (Figure~\ref{fig:events}) shows that Tabuada triggers nearly twice as frequently as the other methods. Its isotropic constraint cannot distinguish benign errors from destabilizing ones, forcing a conservative update rate. The proposed method matches Mazo's sparsity but avoids the performance penalty by using state-dependent geometric information rather than a blind temporal decay.

\begin{figure}[t]
    \centering
    \includegraphics[width=\textwidth]{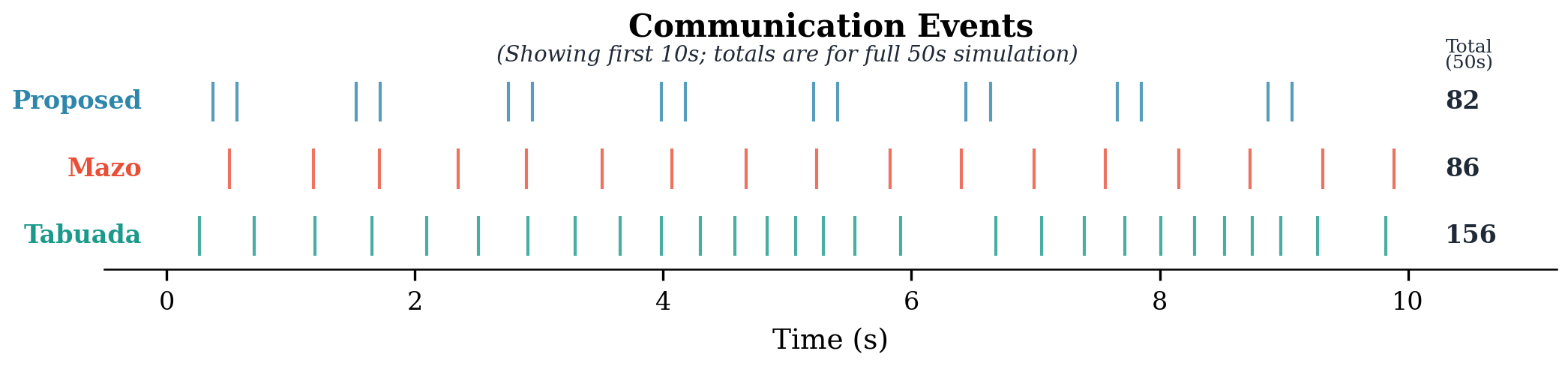}
    \caption{Event timing for $x_0 = [4, 3]^\top$ (first 10\,s). Totals: Proposed 82, Mazo 86, Tabuada 156.}
    \label{fig:events}
\end{figure}

\begin{figure}[t]
    \centering
    \includegraphics[width=0.7\textwidth]{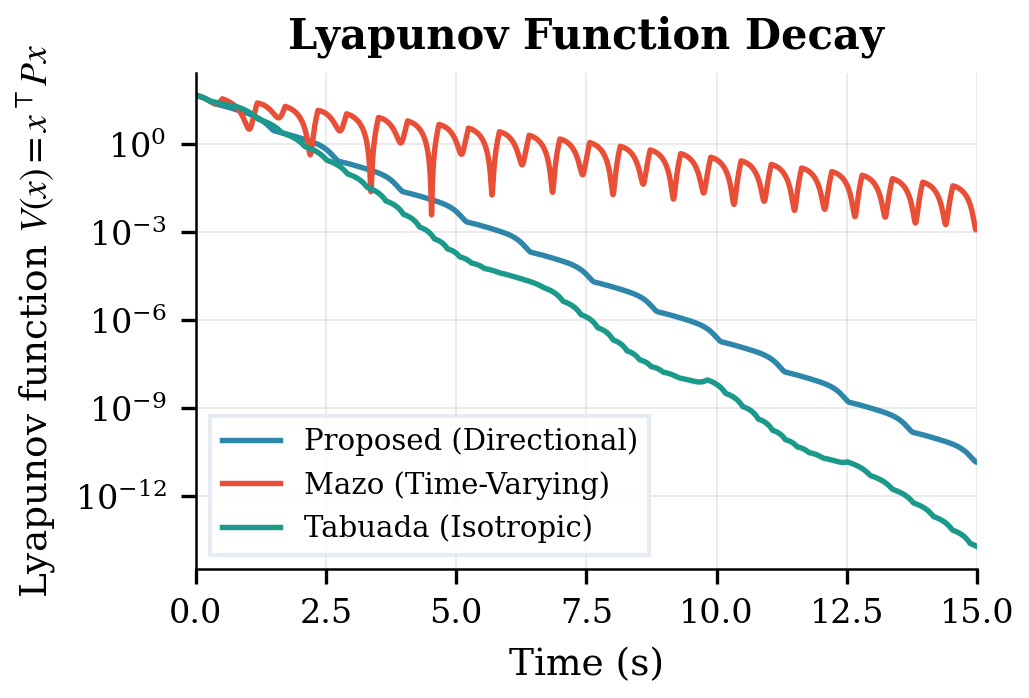}
    \caption{Lyapunov function decay (log scale). Proposed and Tabuada decay monotonically. Mazo allows non-monotonic behavior.}
    \label{fig:lyapunov}
\end{figure}

Figure~\ref{fig:montecarlo} displays the full distribution of metrics. The proposed method exhibits low variance in event counts compared to Tabuada, indicating consistent behavior across the state space. The performance distributions confirm that Mazo's aggressive event reduction comes at the cost of high variance and poor average regulation.

\begin{figure}[t]
    \centering
    \includegraphics[width=\textwidth]{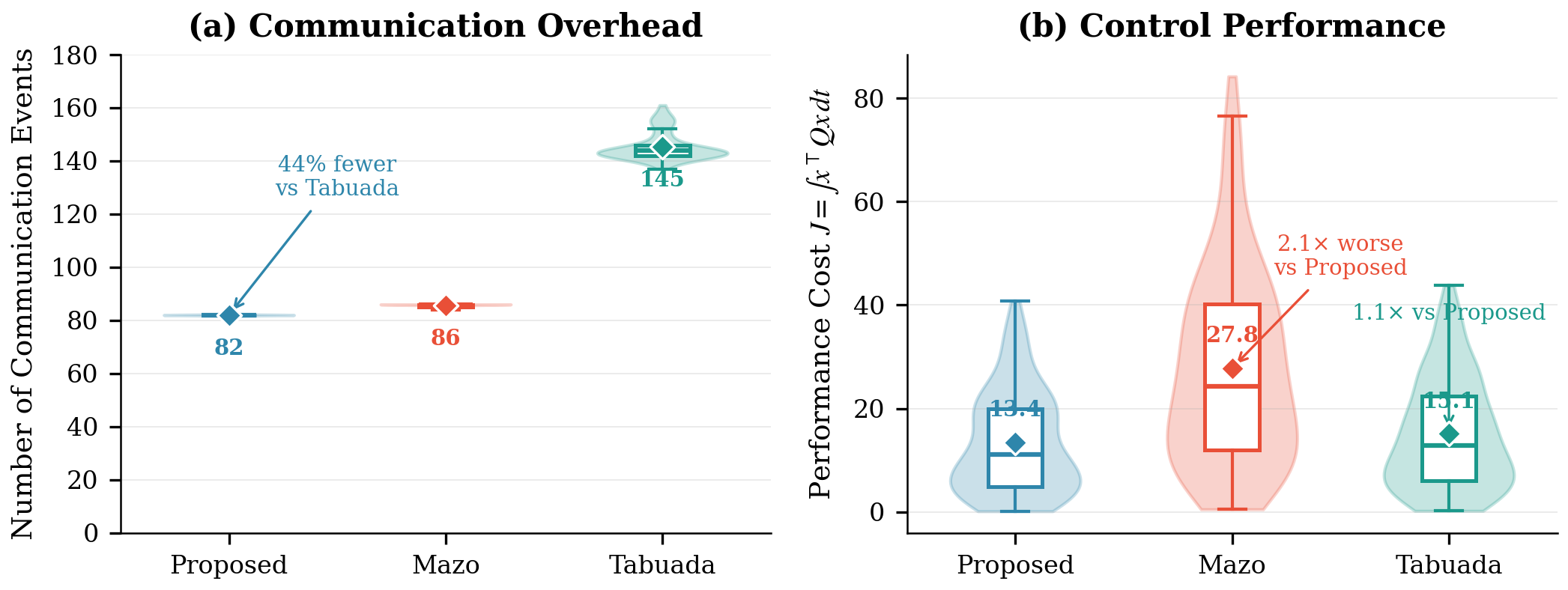}
    \caption{Monte Carlo distributions ($N=100$). (a) Event counts. (b) Performance cost $J$. Violin plots show density; boxes show quartiles; diamonds show means.}
    \label{fig:montecarlo}
\end{figure}

The trade-off between communication and control is visualized in Figure~\ref{fig:pareto}. The proposed method occupies the Pareto-optimal region (lower-left), minimizing both objectives simultaneously. Tabuada sacrifices communication bandwidth for marginal performance gains; Mazo sacrifices performance for minimal communication savings. The chosen parameter $\sigma=0.10$ provides an 81.9\% safety margin relative to the theoretical stability bound.

\begin{figure}[t]
    \centering
    \includegraphics[width=0.7\textwidth]{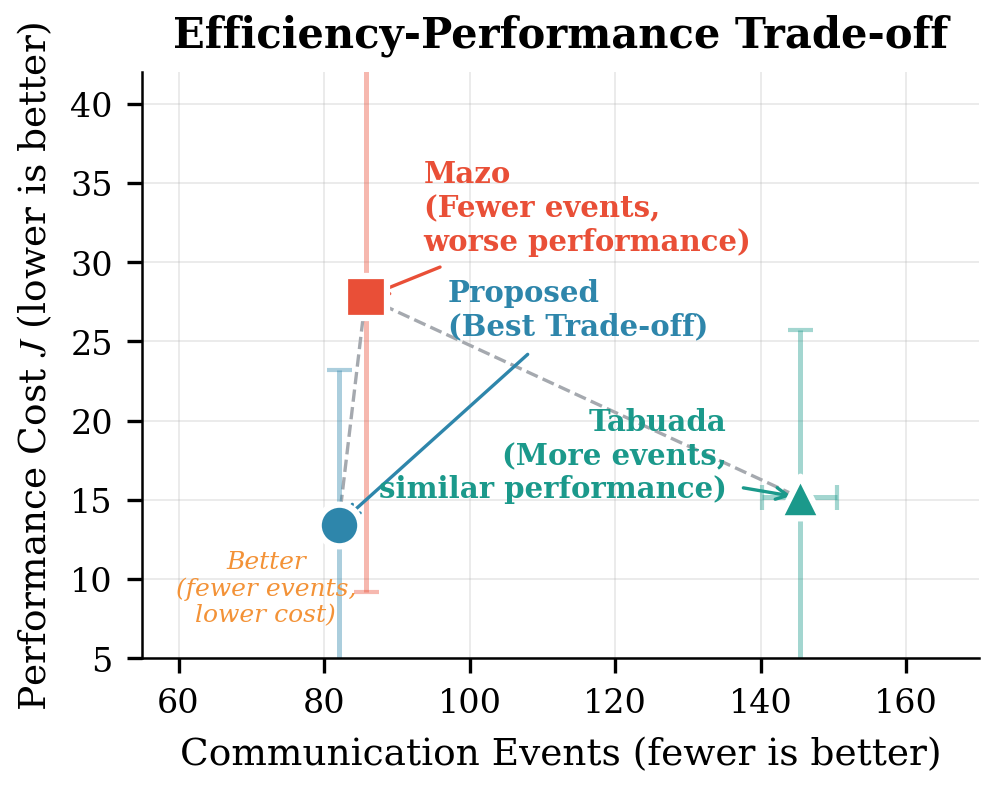}
    \caption{Communication-performance trade-off. Error bars show $\pm 1\sigma$. The proposed method sits at the Pareto-optimal corner (low events, low cost).}
    \label{fig:pareto}
\end{figure}

\section{Technological Impact}\label{sec:impact}

The practical value of event-triggered control depends on the asymmetry between computation and communication costs. In resource-constrained systems, this asymmetry is extreme.

\subsection{Energy Budget Analysis}

Consider a battery-powered industrial sensor (e.g., ADXL355) monitoring vibration. Two energy costs dominate. Local computation (a matrix-vector product on an STM32F4 microcontroller) consumes approximately $10\,\mu\text{J}$ per check~\cite{STM32}. Wireless transmission via LoRaWAN at SF7 consumes approximately $5\,\text{mJ}$ per packet~\cite{Semtech}. The ratio is 500:1; transmission dominates the budget.

To translate the 43.6\% event reduction into battery life, consider a 50-second mission window. The Tabuada baseline requires 145 events at $5\,\text{mJ}$ each, totaling $725\,\text{mJ}$ for communication. The proposed method requires 82 events, totaling $410\,\text{mJ}$. Including sensing and idle costs (${\sim}140\,\text{mJ}$), total consumption drops from $865\,\text{mJ}$ to $550\,\text{mJ}$, a 36\% reduction per mission. For a sensor node with a 40\,kJ battery, this extends the estimated number of missions from approximately 46,200 to 72,700, a 57\% increase in operational cycles.

These projections assume ideal conditions. Lithium-ion cells degrade ${\sim}20\%$ over two years, and industrial networks experience 1--10\% packet loss requiring retransmissions. Continuous operation is also rare; in low-duty-cycle applications, idle power dominates and communication savings are less impactful. Under typical industrial conditions, accounting for these factors, the proposed method extends operational lifetime by 35--50\% compared to optimally tuned isotropic triggering.

\subsection{Time-Independent Verification}

The proposed trigger evaluates condition~\eqref{eq:trigger} using only the current state $x(t)$ and the last transmitted state $x(t_k)$. No elapsed time tracking is required. This contrasts with time-varying methods~\cite{Mazo2010} that must track $V(x(t_k))$ and the time delta $t - t_k$, complicating implementation in distributed systems where clock synchronization is expensive or unreliable.

\subsection{Safety Certification for Learning-Based Controllers}

Neural network controllers can outperform linear feedback but typically lack stability guarantees. We propose using the triggering condition as a Lyapunov Safety Gate to filter unsafe actions.

\begin{theorem}[Lyapunov Safety Gate]\label{thm:safety}
Let $u_{\mathrm{NN}}(x)$ be a control input proposed by a neural network. Accept $u_{\mathrm{NN}}$ only when
\begin{equation}
    2 x^\top P B (u_{\mathrm{NN}} - u_{\mathrm{safe}}) < \sigma x^\top P x,
    \label{eq:safety}
\end{equation}
where $u_{\mathrm{safe}} = -K x(t_k)$ is the last certified hold input. If $\sigma < \frac{\lambda_{\min}(Q)}{2\lambda_{\max}(P)}$, then the closed-loop system is asymptotically stable regardless of the neural network policy.
\end{theorem}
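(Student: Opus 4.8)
The plan is to rerun the Lyapunov computation of Theorem~\ref{thm:stability} one more time, now carrying \emph{two} perturbation terms at once — the zero‑order‑hold sampling error $e = x - x(t_k)$ and the neural deviation $u_{\mathrm{NN}} - u_{\mathrm{safe}}$ — and to observe that the safety gate \eqref{eq:safety} bounds the second term by exactly the same quantity $\sigma\, x^\top P x$ that the directional trigger \eqref{eq:trigger} uses to bound the first. This doubling is precisely what forces the constant in the hypothesis to be half of the one in \eqref{eq:stab_condition}. Concretely: let $\{t_k\}$ be the transmission instants produced by \eqref{eq:trigger}, so $2 x^\top P B K e \le \sigma x^\top P x$ for all $t$, and let the controller apply $u = u_{\mathrm{NN}}$ when \eqref{eq:safety} holds at the current state and the certified hold $u = u_{\mathrm{safe}} = -K x(t_k)$ otherwise. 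Writing $u = u_{\mathrm{safe}} + (u - u_{\mathrm{safe}})$ and $u_{\mathrm{safe}} = -Kx + Ke$, the plant \eqref{eq:sys} becomes $\dot x = A_{cl} x + B K e + B(u - u_{\mathrm{safe}})$, and differentiating $V = x^\top P x$ with \eqref{eq:lyapunov} gives
\[
  \dot V \;=\; -x^\top Q x \;+\; 2 x^\top P B K e \;+\; 2 x^\top P B (u - u_{\mathrm{safe}}).
\]

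Next I would split on the gate. If the gate rejects, $u = u_{\mathrm{safe}}$, the last term vanishes, and the estimate collapses to exactly that of Theorem~\ref{thm:stability}, namely $\dot V \le -x^\top Q x + \sigma x^\top P x$. If the gate accepts, $u = u_{\mathrm{NN}}$ and \eqref{eq:safety} supplies $2 x^\top P B (u_{\mathrm{NN}} - u_{\mathrm{safe}}) < \sigma x^\top P x$, which combined with the trigger bound yields $\dot V < -x^\top Q x + 2\sigma x^\top P x$. The accepting branch dominates, so in both cases $\dot V < -x^\top Q x + 2\sigma x^\top P x$; applying the Rayleigh bounds $x^\top Q x \ge \lambda_{\min}(Q)\|x\|^2$ and $x^\top P x \le \lambda_{\max}(P)\|x\|^2$ gives $\dot V < -(\lambda_{\min}(Q) - 2\sigma\lambda_{\max}(P))\|x\|^2$, which is negative definite exactly when $\sigma < \lambda_{\min}(Q)/(2\lambda_{\max}(P))$. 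Since $V$ is radially unbounded and this bound holds for \emph{every} admissible $u_{\mathrm{NN}}$ — the network enters only through the term annihilated by \eqref{eq:safety} — global asymptotic stability of the origin follows from standard Lyapunov theory, independently of the policy. Zeno freedom is inherited from Theorem~\ref{thm:zeno}: the transmission instants are still those of \eqref{eq:trigger}, and the gate merely selects which of two inputs is applied rather than generating new events.

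The main obstacle is the bookkeeping in the two steps above, not any hard inequality: one has to notice that the hold fallback $u_{\mathrm{safe}} = -Kx(t_k)$ reintroduces the sampling term $BKe$ into the closed loop, so the accepting branch must pay simultaneously for $2x^\top P B K e$ \emph{and} $2 x^\top P B(u_{\mathrm{NN}} - u_{\mathrm{safe}})$, each at most $\sigma x^\top P x$ — this is the entire reason the hypothesis carries a factor $2$ rather than the $\lambda_{\min}(Q)/\lambda_{\max}(P)$ of Theorem~\ref{thm:stability}. A secondary, more technical point is the solution concept for the switched closed loop induced by the discontinuous gate: one should either restrict the gate test to the sampling instants $t_k$ (so the applied $u$ is piecewise constant and solutions are classical between events) or verify that $\dot V < 0$ along Filippov solutions of the gated dynamics; in either reading the displayed bound is the one that certifies stability.
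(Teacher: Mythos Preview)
Your proof is correct and follows essentially the same route as the paper: derive $\dot V = -x^\top Q x + 2x^\top PBKe + 2x^\top PB(u - u_{\mathrm{safe}})$, bound the two perturbation terms separately by $\sigma x^\top P x$ via \eqref{eq:trigger} and \eqref{eq:safety}, and apply the Rayleigh inequalities. If anything you are slightly more careful than the paper, which does not explicitly treat the rejecting branch or flag the switched-solution issue.
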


\begin{proof}
The time derivative of $V(x) = x^\top P x$ is $\dot{V} = x^\top(A^\top P + PA)x + 2x^\top P B u_{\mathrm{NN}}$. Using $A^\top P + PA = -Q + 2PBK$:
\begin{equation*}
    \dot{V} = -x^\top Q x + 2x^\top P B K x + 2x^\top P B u_{\mathrm{NN}}.
\end{equation*}
Decompose $u_{\mathrm{NN}} = u_{\mathrm{safe}} + (u_{\mathrm{NN}} - u_{\mathrm{safe}})$ with $u_{\mathrm{safe}} = -K(x - e)$. The first three terms reduce to $-x^\top Q x + 2x^\top P B K e$:
\begin{equation*}
    \dot{V} = -x^\top Q x + 2x^\top P B K e + 2x^\top P B (u_{\mathrm{NN}} - u_{\mathrm{safe}}).
\end{equation*}
The triggering condition bounds the error term by $\sigma x^\top P x$. The safety condition~\eqref{eq:safety} bounds the deviation term by $\sigma x^\top P x$. Summing:
\begin{equation*}
    \dot{V} < -x^\top Q x + 2\sigma x^\top P x \leq -(\lambda_{\min}(Q) - 2\sigma\lambda_{\max}(P))\|x\|^2.
\end{equation*}
If $\sigma < \frac{\lambda_{\min}(Q)}{2\lambda_{\max}(P)}$, the coefficient is positive and $\dot{V} < 0$.
\end{proof}

\begin{remark}
The safety gate requires $\sigma < \frac{\lambda_{\min}(Q)}{2\lambda_{\max}(P)}$, half the bound in Theorem~\ref{thm:stability}. This accounts for two destabilizing sources: sampling error and neural network deviation. The simulations in Section~\ref{sec:validation} evaluated only the triggering mechanism and used the standard bound.
\end{remark}

Evaluating the quadratic form in~\eqref{eq:safety} on a 168\,MHz microcontroller takes less than $1\,\mu\text{s}$. Compared to typical actuation delays of 1--10\,ms, this check adds negligible latency, making real-time safety filtering feasible.

\section{Conclusion}\label{sec:conclusion}

This paper presented a static directional event-triggered mechanism that weights sampling errors by their projection onto the Lyapunov gradient. The method requires no memory of past trigger times and achieves 43.6\% fewer transmissions than optimally tuned isotropic baselines while maintaining comparable control performance. By exploiting the geometric properties of the Lyapunov function, the trigger identifies and discards communication events that do not contribute to stability.

\begin{figure}[t]
    \centering
    \includegraphics[width=\textwidth]{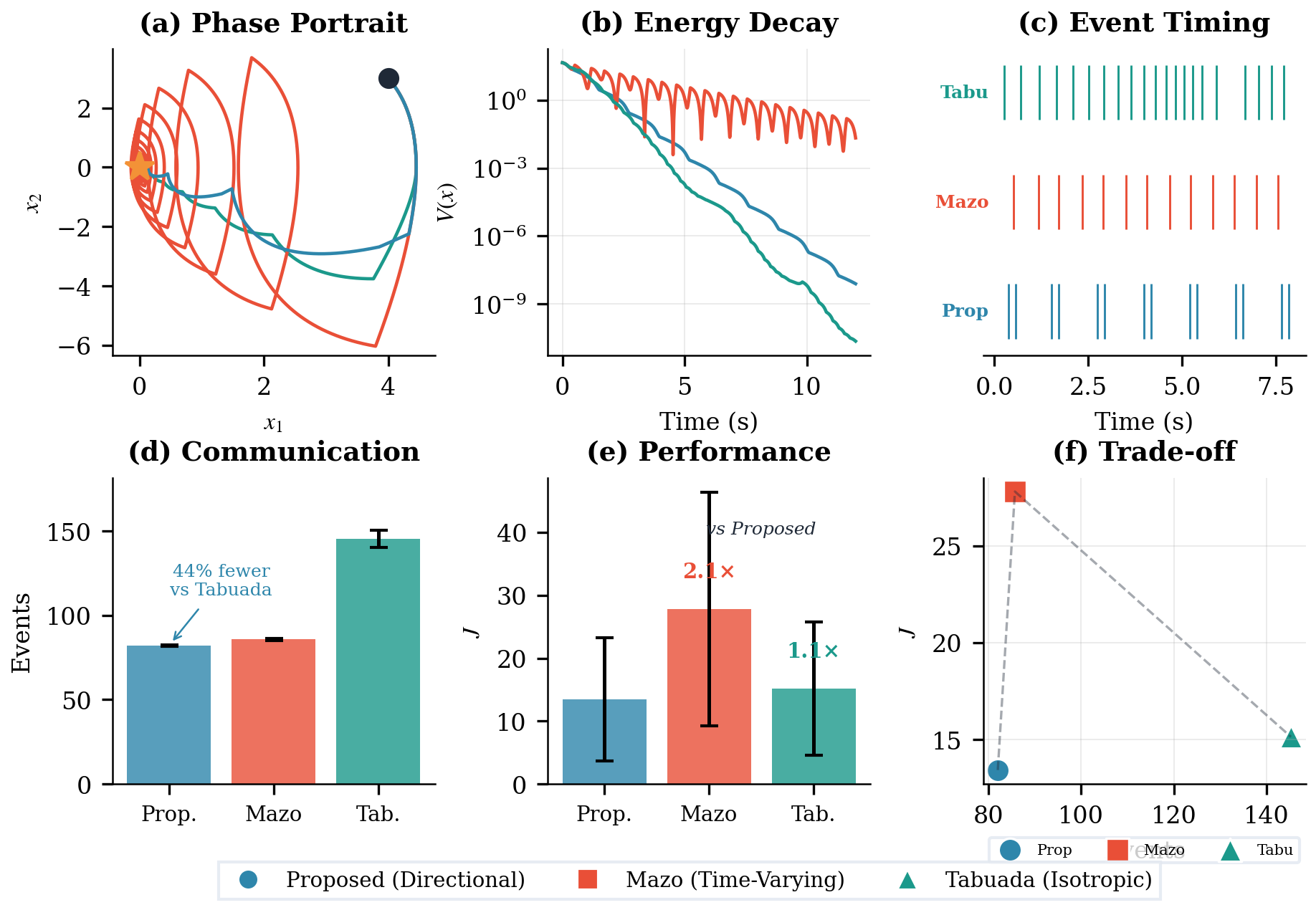}
    \caption{Summary of experimental results. (a) Phase portrait showing smooth convergence. (b) Monotonic Lyapunov decay compared to oscillatory behavior in Mazo. (c) Event timing showing sparse updates. (d) 44\% reduction in communication overhead. (e) Performance cost comparison. (f) Pareto trade-off showing the proposed method at the optimal frontier.}
    \label{fig:summary}
\end{figure}

\subsection{Applicability and Limitations}

The method is most effective when communication costs exceed computation costs by a factor of 100:1 or more, and when the system exhibits directional asymmetry in controllability. Benefits are limited in wired networks where transmission is inexpensive, in fully actuated systems with uniform controllability, or in high-dimensional systems ($n > 100$) where the quadratic form evaluation becomes computationally prohibitive. Extension to nonlinear systems and analysis of robustness under significant network delays remain open problems.

\appendix

\section{Proof of Minimum Inter-Event Time}\label{app:miet}

We establish a strictly positive lower bound on the inter-event time $\tau = t_{k+1} - t_k$.

Between events, the error evolves according to $\dot{e}(t) = \dot{x}(t) = A_{cl}x(t) + BKe(t)$, with initial condition $e(t_k) = 0$. Consider the ratio $\xi(t) = \|e(t)\|/\|x(t)\|$. Differentiating $\xi(t)$ and applying standard norm inequalities yields:
\begin{equation}
    \dot{\xi} \leq L_1 \xi^2 + L_2 \xi + L_3,
    \label{eq:xi_dot}
\end{equation}
where $L_1 = \|BK\|$, $L_2 = \|A_{cl}\| + \|BK\|$, and $L_3 = \|A_{cl}\|$.

The directional triggering condition \eqref{eq:trigger} fires when $2x^\top P B K e \geq \sigma x^\top P x$. Applying Cauchy-Schwarz ($x^\top P B K e \leq \|PBK\|\|x\|\|e\|$) and the Rayleigh quotient ($x^\top P x \geq \lambda_{\min}(P)\|x\|^2$), a sufficient condition for the trigger to \emph{not} fire is:
\begin{equation}
    2\|PBK\|\|x\|\|e\| < \sigma \lambda_{\min}(P)\|x\|^2 \implies \xi(t) < \frac{\sigma \lambda_{\min}(P)}{2\|PBK\|} \coloneqq \Gamma.
\end{equation}

The inter-event time satisfies:
\begin{equation}
    \tau \geq \int_0^\Gamma \frac{d\xi}{L_1 \xi^2 + L_2 \xi + L_3}.
\end{equation}
Since $\sigma > 0$, the threshold $\Gamma$ is strictly positive. The integrand is finite and positive on $[0, \Gamma]$, so $\tau > 0$. \hfill $\blacksquare$

\bibliographystyle{plainnat}
\bibliography{ref}

\end{document}